\newcommand{\be}{\begin{eqnarray} \begin{aligned}}
\newcommand{\ee}{\end{aligned} \end{eqnarray} }
\newcommand{\benn}{\begin{eqnarray*} \begin{aligned}}
\newcommand{\eenn}{\end{aligned} \end{eqnarray*} }
\newcommand{\bc}{\begin{center}}
\newcommand{\ec}{\end{center}}
\newcounter{claimCount}[section]
\newtheorem{claim}{Claim}[claimCount]
\def\01{\{0,1\}}
\newcommand{\eps}{\varepsilon}
\newcommand{\nsw}{noisy-storage model}
\newcounter{protoCount}
\newcounter{protoList}
\newsavebox{\tmpbox}
\newlength{\protobox}
\newenvironment{protocol}[2]{
\bigskip
\addtocounter{protoCount}{1}
\noindent \begin{lrbox}{\tmpbox}
\setlength{\protobox}{\columnwidth}
\addtolength{\protobox}{-0.5cm}
\begin{minipage}[c]{\protobox}
\begin{bfseries}Protocol \theprotoCount: #1\end{bfseries}
\ifthenelse{\equal{#2}{\empty}}{}{\\ #2}
\begin{list}{\begin{bfseries}\arabic{protoList}:\end{bfseries}}
{\usecounter{protoList}}
}{
\end{list}
\end{minipage}\end{lrbox}
\fbox{\usebox{\tmpbox}}
\bigskip
}
\newcommand{\COMMENT}[1]{}
\begin{document}

\title{Long distance two-party quantum cryptography made simple}

\author{Iordanis Kerenidis}
\affiliation{CNRS, Laboratoire de Recherche en Informatique, Univ Paris 11, Orsay, 91405 France}
\affiliation{Centre for Quantum Technologies, National University of Singapore, 2 Science Drive 3, 117543 Singapore}
\author{Stephanie Wehner}
\affiliation{Centre for Quantum Technologies, National University of Singapore, 2 Science Drive 3, 117543 Singapore}
\date{\today}
\begin{abstract}
	Any two-party cryptographic primitive
	can be implemented using quantum communication under the assumption that it is difficult to store a large number of quantum states perfectly. However, achieving reliable quantum communication over long distances remains a difficult problem. 
Here, we consider a large network of nodes with only neighboring quantum links.
	We exploit properties of this cloud of nodes to enable any two nodes to achieve security even if
	they are not directly connected. Our results are based on techniques from classical cryptography and 
	do not resort to technologically difficult procedures like entanglement swapping. 
	More precisely, we show that oblivious transfer can be achieved in such a network if and only 
	if there exists a path in the network between the sender and the receiver along which all nodes are honest.
	Finally, we show that useful notions of security can still be achieved when we relax the assumption of an honest path. For example, we 
	show that we can combine our protocol for oblivious transfer with computational assumptions such that we obtain security if either
	there exists an honest path, or, as a backup, at least the adversary cannot solve a computational problem.
\end{abstract}
\maketitle

Quantum communication allows us to achieve cryptographic security without relying on unproven computational assumptions.
Two nodes, Alice and Bob, can establish a secure key using quantum key distribution~\cite{bb84,e91}, and, moreover, solve any two-party 
cryptographic problem even if they do not trust each other in the \nsw~\cite{prl:noisy, noisy:new, chris:new}. 
Well-known examples of such problems include secure identification~\cite{bounded:secureId}, 
as well as electronic voting and secure auctions. More generally, Alice and Bob wish to solve problems where Alice holds 
an input $x$ (eg. the amount of money she is willing to bid for an item sold by Bob) and Bob holds an input $y$ (e.g. his minimum asking price), 
and they want to obtain the value of some function $f(x,y)$ (e.g. output no if 
$x < y$, and $x$ otherwise) as depicted below. 
In this setting, there is no outside eavesdropper but Alice or Bob themselves may be dishonest. 
Security thereby means 
that Alice should not learn anything about $y$ and Bob should not learn anything about $x$, apart from what can be inferred from the value of $f(x,y)$~\cite{yao:sfe}.
\begin{center}
\includegraphics[scale=0.6]{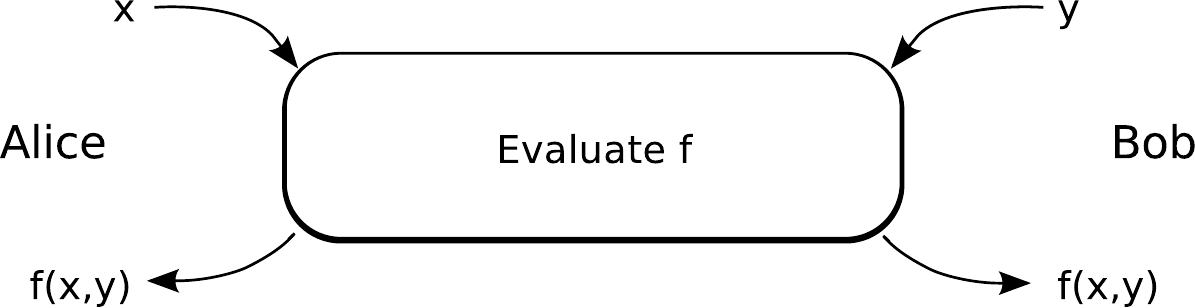}
\end{center}

Unfortunately, quantum communication over long distances poses a formidable problem. 
At present, quantum key distribution has been achieved over a distance of at most 145km in fiber~\cite{hughes:fiber} or 144km in freespace~\cite{qkd:freespace1,zeilinger:qkd}.
In addition, having a direct communication link between any two nodes that may wish to communicate is an infeasible problem even when it comes to classical
communication. Instead, we have networks of nodes, such as the present day internet, in which only some nodes are directly connected, but 
are willing to relay communication for other nodes who do not share a direct link.
Typically it is easy to connect two nodes who are physically close.
In order to achieve longer distances, many forms of quantum repeaters have been proposed in order extend the range of quantum communication to
obtain a quantum version of the internet~\cite{lloyd:internet, kimble:internet}.
Broadly speaking, quantum repeaters used in key distribution come in two variants: in the first, the nodes along
the path between Alice and Bob are trusted, and we perform quantum key distribution between each two neighbours.
This form of repeater is known as trusted relay and was for example used in the network of SECOQC~\cite{secoqc}.
The second method is to have the intermediary nodes create entanglement, allowing Alice and Bob to create entanglement between them using
the concept of entanglement swapping~\cite{entSwapping}. This is clearly more desirable than relying on trusted relays, but technologically
very difficult to achieve especially when there are many intermediary nodes. 
Many experiments have been done over the last twelve years~\cite{zeilinger:entanglement, entSwapex2, entSwapex3}, 
but still we are far from using this technology for QKD~\cite{secoqc}, and similarly for the case of two-party computation in the \nsw.
What both of these approaches have in common is that they first try to create the analog of a point-to-point link between Alice and Bob
to solve the final cryptographic task. 

Here, we take a novel approach using techniques from classical cryptography to bridge the potentially large physical distance between Alice
and Bob. 
Concretely, we for the first time consider the case where any two nodes that are directly connected by a (quantum)
communication link can securely solve the universal cryptographic problem of oblivious transfer (OT),
which in turn enables them to solve any two-party cryptographic problem~\cite{kilian:foundingOnOT}. Implementations of such protocols ({\em link}-OT) can be found in \nsw~\cite{prl:noisy,noisy:new,chris:new}. 
Any node in the network may behave honestly, or be dishonest in the sense that it will collaborate with the dishonest Alice or Bob.
A dishonest node also has full control over the communication links attached to it, making it more powerful than for example
the eavesdropper in QKD who only has access to the communication link and not to any of the individual labs.
\begin{figure}
	\includegraphics{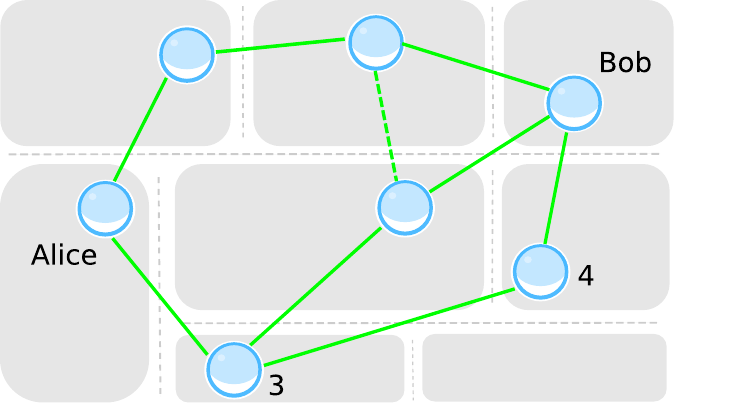}
	\caption{\scriptsize If any two nodes with a direct link can perform oblivious transfer,
	then Alice and Bob can solve any two-party cryptographic problem as long as there exists a path from Alice to Bob (e.g., $3$ and $4$) along which all intermediary nodes are honest, 
	or the cheating party cannot solve a computational problem efficiently.}
	\label{fig:nodes}
\end{figure}

{\bf Results } We first provide a simple protocol for oblivious transfer between Alice and Bob, who do not share a direct quantum link, that is secure, as long as all nodes along one of the paths from Alice to Bob
are honest ({\em path}-OT). We will refer to this path as an \emph{honest path} from Alice to Bob, which is in flavor similar to recent extensions to the idea of trusted relays for
QKD~\cite{salvail:repeaters}. However, we prove that this is in fact all we can hope to achieve for secure two-party computation without any 
additional resources: Given only the resource of link-OT and classical communication, no protocol between Alice and Bob can be secure without the existence of an honest path.
Furthermore, we show that link-OT is in fact a necessary condition for any protocol to be secure, i.e., 
having access to a large network of nodes does not allow us to solve the problem of oblivious transfer on its own~\footnote{Note that
there could be only a single node on the honest path from Alice to Bob, which means we cannot simply solve this problem by assuming only very few
of the nodes are dishonest as in secure multi party computation.}.

Then, we successively relax the assumption of an honest path.
First, since relying on an honest path alone may still be rather unsatisfactory, we show that we can add a security backup in the sense that our protocol can be made
secure if there is either an honest path, or at least the dishonest node cannot break a computational assumption.
We then show that the assumption of the honest path can be relaxed if each pair of nodes are given a classical shared key for free, 
and finally that a non-trivial notion of security is still achievable for a node, even if everyone else in the network is dishonest. 

Our results open the door for extending implementations of oblivious transfer in the \nsw\ to large distances similar to the case of QKD~\cite{secoqc}.


\section{The protocol}

Let us first explain the problem of oblivious transfer (OT)~\cite{rabin:OT}; a formal definition can be found in~\cite{noisy:new}.
Alice (the sender) holds two input strings $s_0,s_1 \in \01^\ell$~\footnote{Chosen uniformly at random, unknown to Bob.} 
and Bob (the receiver) holds a choice bit $c \in \01$. If both nodes are honest, 
Bob should receive the input of his choosing, $s_c$, at the end of the protocol. 
If Bob is honest, then our goal is to ensure
that whatever attack Alice may mount, she can nevertheless not gain any information about $c$. Conversely, if Alice is honest,
we want that a dishonest Bob is unable to gain any information about at least one of Alice's inputs, $s_{1-c}$.
Whereas oblivious transfer by itself may seem like a rather obscure task, it has in fact been shown that Alice and Bob can use it to solve
any other cryptographic problem securely~\cite{kilian:foundingOnOT}. 
Below we use $\mbox{OT}((s_0,s_1),c)$ to indicate that we use a link-OT protocol as a black box.

We now provide two protocols, where the first is unconditionally secure for the sender Alice~\footnote{It is secure even if there is no honest path.}
and secure for the receiver Bob provided there is an honest path. The second has exactly opposite security properties: it is unconditionally secure for 
Bob and secure for Alice provided there is an honest path.
Let $N$ be the number of paths connecting Alice to Bob and denote by $v_1,\ldots,v_N$ the nodes adjacent to Alice on the $N$ possible paths.
We use '$+$' and '$\cdot$' to indicate bitwise addition and multiplication modulo $2$ respectively, and $\in_R$ to denote that a variable has been chosen uniformly and independently at random.

\begin{protocol}{}{Input: $s_0,s_1 \in \01^\ell$ for Alice, $c \in \01$ for Bob Output: $s_c$ to Bob.}
\label{proto:pathOT}
\item[{\bf Bob:}] Chooses $N$ bits ${c}_1,\ldots,{c}_{N} \in_R \01$ such that
$c = {c}_{1} + \ldots + {c}_N$. He sends ${c}_j$ to node $v_j$ along the $j$-th path.
\item[{\bf Alice:}]  Chooses $N$ keys $r_1,\ldots,r_{N} \in_R \01^\ell$ such that $r_1+\ldots +r_N=0$. 

	Performs $\mbox{OT}((t_{0,1},t_{1,1}),{c}_1)$ with $t_{0,1} = s_0 + r_1$ and $t_{1,1} = s_1 + r_1$ with node $v_1$,
	and $\mbox{OT}((t_{0,j},t_{1,j}),{c}_j)$ with
	$t_{0,j} = r_j$ and $t_{1,j} = s_0 + s_1 + r_j$ with nodes $v_2,\ldots,v_{N}$.
\item[{\bf Intermediary Nodes:}] Node $v_j$ sends $t_{{c}_j,j}$ to Bob along the $j$-th path.
\item[{\bf Bob:}] Computes $s_c = t_{{c}_1,1} + \ldots + t_{{c}_N,N}$.
\end{protocol}

First, the protocol is correct when both players are honest, since Bob computes
\begin{equation}
	s_c  =  t_{{c}_1,1} + \ldots + t_{{c}_N,N} = s_0+(s_0+s_1)\cdot c \label{eq:scDefine}
\end{equation}  

We now argue that the protocol is also secure; details can be found in the appendix.
Suppose first that Bob is dishonest. Since Alice uses fresh keys $\{r_j\}_j$ in each round, and $s_0$ and $s_1$ are themselves randomly chosen bit strings unknown to Bob, 
Bob would need to retrieve at least $N+1$ entries $t_{c_j,j}$ in order to compute both $s_0$ and $s_1$. However, even if Bob is working together with
all intermediary nodes, he can only learn at most one of $(t_{0,j},t_{1,j})$ from each of the $N$ link-OTs 
with Alice. Hence, Bob learns nothing about one of $s_{0}$ or $s_1$ as desired.

Suppose now that Bob is honest, and 
there exists an honest path between Alice and Bob. Note that Bob effectively performs a secret sharing of his input along all paths, so that Alice needs all shares in order to recover $c$~\cite{Sha79}. However, the share on the honest-path remains unknown to Alice. The security of the link-OT
ensures she cannot use it to gain any information about $c$ either.

Our second protocol is similar, but with Alice performing a secret sharing of her inputs.
Let $w_1,\ldots,w_N$ be the nodes adjacent to Bob on the $N$ possible paths.

\begin{protocol}{}{Input: $s_0,s_1 \in \01^\ell$ for Alice, $c \in \01$ for Bob Output: $s_c$ to Bob.}
\label{proto:longOT2}
\item[{\bf Alice:}] Chooses $N$ strings ${s}_{01},\ldots,{s}_{0N} \in_R \01^\ell$ such that $s_0 = {s}_{01} + \ldots + {s}_{0N}$ and 
	similarly ${s}_{11},\ldots,{s}_{1N} \in_R \01^\ell$ such that $s_1 = {s}_{11} + \ldots + {s}_{1N}$. She sends bits $s_{0j},s_{1j}$ to node $w_j$, i.e. the $j$-th neighbour of Bob via the $j$-th path. 
\item[{\bf Bob:}] Performs $\mbox{OT}( (s_{0j},s_{1j}),c)$ with node $w_j$.
	Computes $s_c = s_{c1} + \ldots + s_{cN}$.
\end{protocol}

Clearly, the protocol is correct if both parties are honest.
The security of the link-OT for the receiver ensures that even if a dishonest Alice controls all nodes adjacent to Bob, she nevertheless cannot learn $c$. 
Finally, the protocol is secure against a dishonest Bob, assuming that there exists an honest path: 
In this case, at least one of the shares $s_{0j}$ or $s_{1j}$ remains unknown to Bob, since they are securely transmitted to node $w_j$ via the honest path, and the link-OT protocol between $w_j$ and Bob is secure for the sender. Hence, he cannot learn both inputs $s_0,s_1$.

One may wonder whether we could have constructed a path-OT protocol without relying on the existence of a link-OT protocol, which is impossible
to obtain without assumptions~\cite{lo:insecurity}.
However, it is easy to see that the existence of \emph{any} path-OT protocol  would imply a secure link-OT protocol between two directly connected
parties, Anne and Bill:
First, Anne picks a path from Alice to Bob in the original setting. Then Bill picks a path from
Bob to Alice. 
The remaining paths they split arbitrarily. Now Anne acts
as Alice would and in addition simulates the action of all nodes in the paths assigned
to her. Bill also simulates the actions of Bob together with all nodes in the paths
assigned to him. Clearly, no matter who will be dishonest, we are
always in the setting where there is an honest path in the original
protocol, as one path is always simulated by someone being honest.
This means that we cannot hope to achieve OT in the honest-path model
without additional assumptions either~\cite{lo:insecurity}.

\section{Security without an honest path}

However, one might still hope that given such a strong primitive as link-OT we might be able to achieve security using only classical communication, even without the assumption of an honest path.
Unfortunately, it turns out that an honest path is indeed a necessary condition for security:
If there is no honest path, then there exists a subset of corrupted nodes $M$, such that any communication between Alice and Bob goes through them. 
Intuitively (see appendix for details) this means that either $M$ can gain information about $c$, or else must know enough about $s_0$ and $s_1$ 
to be able to supply Bob with the desired output. In the first case, dishonest Alice can learn $c$ from $M$, and in the second dishonest Bob
can break security by learning information about both of Alice's inputs.

{\em A security backup: } Nevertheless, the assumption of an honest path may appear quite strong, and it would be useful to 
have some security guarantees even if this assumption fails.
Fortunately, it is straightforward to adopt existing techniques from classical cryptography~\cite{HKNRR05, juerg:combiner} to extend
our protocols to be secure if either the honest-path assumption holds, or else if the dishonest party cannot break a certain computational 
problem. To this end, we combine our protocol with a protocol for classical oblivious transfer. OT can be achieved classically under a large 
variety of assumptions. Here, we choose to combine our protocol with the protocol of Naor and Pinkas~\cite{naor:efficientOT}, which is secure against
a dishonest sender if he cannot break the decisional Diffie Hellman problem (DDH), and unconditionally secure against a dishonest receiver.
Note that this means that just like for our honest-path assumption, we have unconditional security against one party, and security according to either
the DDH or the honest-path assumption against the other~\footnote{We can similarly construct a protocol that is secure against a dishonest receiver if there exists an honest path or he cannot break the decisional Diffie Hellman problem (DDH), and unconditionally secure against a dishonest sender.}.
Using the $\{3,2\}$-robust uniform OT-combiner from~\cite[Theorem 2]{juerg:combiner} we hence immediately obtain that
there exists an oblivious transfer protocol that is secure if either the honest path or the DDH assumption holds using
two instances of protocol~1, and two instances of the OT protocol of~\cite{naor:efficientOT}.
An explicit protocol can be found in~\cite{juerg:combiner}. 

{\em Secret keys: }
In the classical model for secure multiparty computation one usually assumes that there exist private links between all nodes 
and we are trying to show security against subsets of dishonest nodes. 
Clearly, this is a strong assumption as it requires 
us to establish keys over potentially long distances.
Nevertheless, it is interesting to consider a hybrid-model, 
where there exists a complete network of classical private links and also a network of quantum links between neighboring nodes allowing them to perform
link-OT. It is easy to see that our protocol can be transformed to achieve security as long as one of the neighbours of Alice and Bob is honest, instead of the entire
path being honest: we use the private channels to directly communicate with the immediate neighbours instead of relying on the entire path.
This easy example shows that allowing link-OT is indeed more powerful than what one can hope to gain in the classical model of secure multi-party computation.

{\em No assumptions: }
Finally, let us consider what happens if we allow an \emph{arbitrary} number of network nodes to be dishonest.
Curiously, some weak notion of security still remains. More specifically, by performing our two protocols sequentially with 
different inputs for Alice and Bob in the two executions, we can trivially construct a form of path-OT where Alice has four inputs, and Bob has two choice bits 
such that: if everyone is honest, then Bob learns two of the four bits, and Alice learns nothing about Bob's two index bits. If Alice is honest, but everyone else in the network is dishonest, then Bob learns three bits, but \emph{not} all four of them. If Bob is honest, but everyone else is dishonest, then Alice learns one of the two index bits of Bob, but \emph{not} both of them. These properties follow directly from our previous analysis. 

Note that this weak form of security is still impossible classically on a complete network with private links, unless computational assumptions are added. In our model, it becomes possible because we added the neighboring quantum links and assumed that we can perform short distance OT protocols via these quantum links. One 
can turn this weak OT protocol into some weak bit commitment protocol as well, leading to weak forms of coin tossing over long distances. 

\section{Conclusions}

We have shown security against dishonest Alice (or Bob) whenever there is at least one honest path, or the dishonest party cannot break a computational assumption.
One can easily extend our protocol to be robust against the case where the intermediary nodes may be dishonest independently of Alice
and Bob, and try to alter Alice's or Bob's input.
In our present protocols this is of course possible since they could for example flip one of the bits $\{c_j\}_j$.
To make the protocol robust we can simply use a more advanced secret sharing scheme that, similar to an error correcting code, 
protects against `errors' introduced in the secrets~\cite{CGMA85}. Note that depending on our choice
of secret sharing scheme, we may require more than one honest path to achieve robustness or more communciation rounds~\footnote{By letting Alice and Bob use some runs of the protocol to detect tampering by any intermediary nodes.}.

Our protocols show that two-party cryptographic primitives can be implemented over long distances in an extremely simple manner.
Our result enables us to extend the range of protocols in the \nsw\ in a similar way as has been done in QKD~\cite{secoqc}. 
Clearly, our protocols still require a considerable amount of classical communication. However, this is technologically much 
easier to achieve than entanglement swapping which of course still remains the more desirable solution. 
The quantum operations that the nodes are performing are no harder than the ones necessary in the link-OT protocols, i.e. it suffices that they create and measure BB84~\cite{bb84} states~\cite{noisy:new}. 
No complicated operations like Bell state measurements, or memory are required. 

\acknowledgments
SW was supported by NSF grants PHY-04056720, PHY-0803371, as well as 
the National Research Foundation and the Ministry of Education, Singapore. 
IK was supported by ANR grants ANR-09-JCJC-0067-01 and ANR-08-EMER-012.
Part of this work was done while SW was at the Institute for Quantum Information, Caltech.

\bigskip

\section{Security of protocol 1}
Whereas we ideally show security using the formal definitions for fully-randomized OT~\cite{noisy:new}, we restrict ourselves
to the simple arguments below in order to not obscure our argument, which is sufficient since
our setting is very straightforward to analyze.

\begin{claim}
Protocol 1 forms a secure oblivious transfer scheme with unconditional security against Alice, and security against Bob whenever
there exists an honest path.
\end{claim}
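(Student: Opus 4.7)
I would split the claim into the two independent security guarantees and treat each separately. The first, unconditional hiding of $c$ from a dishonest Alice, reduces to combining the receiver-security of each link-OT with the perfect hiding of Bob's XOR-sharing $c=c_{1}+\cdots+c_{N}$. The second, hiding of at least one of $s_{0},s_{1}$ from a dishonest Bob under the honest-path assumption, reduces to the sender-security of the link-OT instance on the honest path together with the algebraic constraint $r_{1}+\cdots+r_{N}=0$ that couples Alice's masks across paths.

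\textbf{Part 1 (unconditional security against Alice).} I model a dishonest Alice together with an arbitrary coalition $A$ of intermediary nodes as a single adversary. Bob's sharing is a perfect additive secret sharing: any proper subset of $(c_{1},\ldots,c_{N})$ is jointly uniform and carries no information about $c$. For each path $j$, the only piece of the adversary's view that can depend on $c_{j}$ is the transcript of the link-OT in which Alice plays sender and $v_{j}$ plays receiver with input $c_{j}$. Invoking receiver-security of the link-OT, Alice's (hence $A$'s) view of that subprotocol is statistically independent of $c_{j}$. Composing over the $N$ link-OTs, whose internal randomness is independent, the adversary's joint view is independent of $(c_{1},\ldots,c_{N})$, and in particular of $c$. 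No network-level assumption enters the argument.

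\textbf{Part 2 (security against Bob given an honest path).} I model a dishonest Bob with an arbitrary coalition $B$ of intermediary nodes, and fix an honest path $j^{\star}$, so $v_{j^{\star}}\notin B$ and every node on path $j^{\star}$ is honest. Apply link-OT sender-security to the OT between Alice and $v_{j^{\star}}$: the honest receiver $v_{j^{\star}}$ obtains only $t_{c_{j^{\star}},j^{\star}}$ and the unselected input $t_{1-c_{j^{\star}},j^{\star}}$ is statistically independent of its view. Because every intermediary on path $j^{\star}$ is honest, no further information about $r_{j^{\star}}$ or about $t_{1-c_{j^{\star}},j^{\star}}$ reaches the coalition. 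Write $b_{j}\in\{0,1\}$ for the effective choice used on path $j$ (so $b_{j^{\star}}=c_{j^{\star}}$, and $b_{j}$ for $j\ne j^{\star}$ may be chosen freely by the coalition when $v_{j}\in B$). A direct computation using $r_{1}+\cdots+r_{N}=0$ yields
\[
\sum_{j=1}^{N}t_{b_{j},j}\;=\;s_{b_{1}+\cdots+b_{N}},
\]
so the coalition's view pins down $s_{b}$ for the single value $b=b_{1}+\cdots+b_{N}$; any further information about $(s_{0},s_{1})$ would require access to $t_{1-c_{j^{\star}},j^{\star}}$, which we just showed is hidden. Hence at least one of $\{s_{0},s_{1}\}$ remains unknown to the coalition.

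\textbf{Main obstacle.} Part 1 is a clean composition argument that only uses receiver-security of link-OT and the hiding property of additive secret sharing. The technical heart lies in Part 2: I must show that even after an adaptive coalition corrupts every $v_{j}$ for $j\ne j^{\star}$ and thereby obtains one link-OT output per non-honest path, the constraint $\sum_{j}r_{j}=0$ together with sender-security on the honest path still leaves $t_{1-c_{j^{\star}},j^{\star}}$ uniformly random in the coalition's view. Making this rigorous amounts to a reduction to the link-OT sender-security game, replacing the informal conditioning on the coalition's view by a simulation-based argument; once that is in hand, the displayed algebraic identity immediately gives that precisely one linear function of $(s_{0},s_{1})$ is determined, as required for OT sender security.
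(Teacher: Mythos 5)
Your decomposition attaches the two security directions to the wrong assumptions, and the resulting Part~1 contains a genuine error. In Protocol~1 Bob sends the share $c_j$ \emph{in the clear} to node $v_j$ along the $j$-th path, and $v_j$ then uses $c_j$ as its choice bit in the link-OT with Alice. So your key assertion that ``the only piece of the adversary's view that can depend on $c_j$ is the transcript of the link-OT'' is false: any corrupted relay on path $j$, and in particular a corrupted $v_j$ itself, knows $c_j$ outright and hands it to Alice. If every path contains at least one corrupted node, the coalition reconstructs $c = c_1 + \cdots + c_N$ exactly. Hiding $c$ from a dishonest Alice is therefore \emph{not} unconditional --- it is precisely the guarantee that needs an honest path, and that is how the paper proves it: on the honest path the share $c_{j^\star}$ is carried only by honest nodes, and the link-OT between Alice and $v_{j^\star}$ is receiver-secure, so that one share (hence $c$) stays hidden. (The claim's wording ``unconditional security against Alice'' is admittedly confusing; the paper's own proof and main text make clear that the unconditional guarantee is the one protecting honest Alice, i.e.\ against a dishonest Bob.)

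Conversely, the guarantee that actually is unconditional is the one you proved only under the honest-path hypothesis: hiding one of $s_0,s_1$ from a dishonest Bob. The paper establishes this with \emph{all} intermediaries corrupted: by sender-security each of the $N$ link-OTs yields at most one of $(t_{0,j},t_{1,j})$; writing $d_j$ for the choices actually made, the $N$ learned values sum to $s_b$ for a single $b$, while $s_{1-b}$ requires the additional value $t_{1-d_N,N}$; and since the masks $r_j$ are fresh and uniform subject only to $\sum_j r_j = 0$, the $N+1$ quantities $t_{d_1,1},\ldots,t_{d_N,N},t_{1-d_N,N}$ are independent, so the missing one --- equivalently $s_0+s_1$ --- remains uniform given the coalition's view. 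Your Part~2 identity $\sum_j t_{b_j,j} = s_{b_1+\cdots+b_N}$ is correct and is the right starting point, but you should run that argument with no honest intermediary at all, and transfer the honest-path hypothesis to the receiver-privacy direction. You also omit the correctness check, though it is implicit in your displayed identity.
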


\begin{proof}
	We first show that the protocol is correct when both Alice and Bob are honest. This follows immediately by noting that Bob can compute

\begin{eqnarray}
        s_c & = & t_{{c}_1,1} + \ldots + t_{{c}_N,N} \label{eq:scDefine2}\\
& = & (s_0+(s_0+s_1)\cdot c_1 + r_1) + \sum_{i=2}^N((s_0+s_1)\cdot c_i+r_2)\nonumber\\
& = & s_0+(s_0+s_1)\cdot c \;\; = \;\; s_c\ .\nonumber
\end{eqnarray} 

We now show that the protocol is secure if Alice is honest, where 
we allow all intermediary players and Bob to be dishonest. 
From the security of the link-OT protocol, it follows that Bob can learn at most one of Alice's inputs from each invocation. 
In the most general cheating strategy, Bob can arbitrarily choose values as his input bits to the $N$ link-OT protocols. 
Let $d_1,d_2,\ldots,d_N$ denote these inputs and let $t_{{d}_1,1},\ldots,t_{{d}_N,N}$ be the inputs of Alice that Bob learns. 
Note that for any choice of Bob's inputs $\{d_i\}_i$ there exists a $c \in \01$ such that
$t_{{d}_1,1} + \ldots + t_{{d}_N,N} = s_c$. 
Moreover, $t_{{d}_1,1} + \ldots + t_{{d}_{N-1},N-1}+ t_{1-{d}_{N},N} = s_{1-{c}}$. 
Our goal is now to show that Bob cannot gain any information
about $s_{1-c}$.
First of all, note that since Alice uses fresh keys $\{r_j\}_j$ in each link-OT, and $s_0$ and $s_1$ are themselves randomly chosen bit strings 
unknown to Bob, the values of 
$t_{ {d}_1,1},\ldots,t_{ {d}_N,N}$ and $t_{1-{d}_N,N}$ are all independent.
Hence, Bob would need to retrieve all such $N+1$ entries in order to compute both $s_0$ and $s_1$, which contradicts the security of the link-OT.

It remains to prove security if Bob is honest. 
Note that Bob effectively performs a secret sharing of his input 
\begin{align}
	c = \sum_{j \in \{1,\ldots,N\}} c_j
\end{align}
along all paths such that
the bit $c$ can only be recovered if and only if Alice learns all shares $\{c_j\}_j$. However, Alice has no information about the value of $c_j$ on the honest-path as links between honest players are secure.
Furthermore, the link-OT used between Alice and $v_j$ is secure for the receiver, and hence
we conclude that Alice cannot learn $c$ as promised.
\end{proof}

\section{Necessity of the honest path}
We now prove that an honest-path is a necessary condition for OT, where we use a weaker definition which is implied by the formal ones given e.g. 
for (fully randomized) oblivious transfer in~\cite{noisy:new}. Note that this is sufficient to prove the impossibility of the more difficult task
as well. 
More concretely, the following conditions must hold for any protocol that is both correct and secure.
Any impossibility proof for a protocol aiming for perfect security is rather unsatisfactory since we would be willing to accept
a very small probability of failure. We hence include a security parameter $\eps > 0$ which intuitively corresponds to the error 
we are willing to accept.

First of all, for any protocol that is correct we must have
that the probability that honest Bob with input $c$ can guess honest Alice's input, $s_{c}$, satisfies
\begin{align}
	\mbox{Correctness: } \Pr[s_{c}|Bob] \geq 1 - \eps\ .
	\label{eq:correct}
\end{align}
Furthermore, if Alice is honest, then for whatever attack Bob may conceive we have that he cannot guess at least one of the two inputs
\begin{align}
	\mbox{Security against Bob: } \exists b\ \Pr[s_{b}|Bob] \leq \frac{1}{2^\ell} + \eps\ ,
	\label{eq:Bobcheats}
\end{align}
Finally, if Bob is honest and his input bit is $c$, then for any strategy of dishonest Alice, she is unable to learn Bob's choice bit
\begin{align}
	\mbox{Security against Alice: }  \Pr[{c}|Alice] \leq \frac{1}{2} + \eps\ .
	\label{eq:Alicecheats}
\end{align}

	To obtain an impossibility proof, our goal is now to 
	show that~\eqref{eq:correct},~\eqref{eq:Bobcheats} and~\eqref{eq:Alicecheats} can never be satisfied simultaneously 
	for small values of $\eps$. That is, we can only hope to achieve very imperfect version of oblivious transfer with a large error $\eps$.

\begin{claim}
There exists no protocol for oblivious transfer based on only link-OT and classical communication that is secure without an honest path between Alice and Bob
with security parameter $\eps < 1/4 - 1/2^{\ell+2}$. 
\end{claim}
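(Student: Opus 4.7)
The plan is to contradict~\eqref{eq:Bobcheats} by exhibiting an attack of the dishonest coalition formed by Bob together with a subset $M$ of intermediary nodes. First, from the absence of an honest path, I would extract a vertex cut $M$ of dishonest nodes separating Alice from Bob, so that every message between them passes through $M$. The key structural fact this provides is the Markov chain $V_A \to V_M \to V_B$: since $V_M$ contains every message on both boundaries of the cut, conditioning on $V_M$ makes Alice's view $V_A$ and Bob's view $V_B$ independent; indeed the remaining private data on each side, $(s_0,s_1,R_A)$ on Alice's and $(c,R_B)$ on Bob's, are a priori independent and all cross-interaction has been absorbed into the transcripts recorded in $V_M$. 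This factorisation is the technical workhorse of the rest of the argument.

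The second step is to translate~\eqref{eq:Alicecheats} into a statistical-distance bound. Applied to the Alice-$M$ coalition, whose joint view is $(V_A,V_M)$, the usual identity between optimal binary guessing probability and statistical distance yields
\begin{equation*}
SD\bigl(P(V_A,V_M\mid c=0),\,P(V_A,V_M\mid c=1)\bigr)\le 2\eps,
\end{equation*}
and since $s_0,s_1$ are components of $V_A$, the same bound holds after marginalising to $(s_0,s_1,V_M)$.

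I would then construct the attack. The Bob-$M$ coalition executes the protocol honestly with Bob's input fixed to $c=0$; by~\eqref{eq:correct}, Bob's honest output $\hat s_0$ equals $s_0$ with probability at least $1-\eps$. In parallel, $M$ uses its knowledge of the protocol and of $V_M$ to sample a simulated randomness $\tilde R_B$ from the posterior $P(R_B\mid V_M,c=1)$, sets $\tilde V_B=(1,\tilde R_B,T_{MB})$ with $T_{MB}$ read off from $V_M$, and outputs $\hat s_1=g(\tilde V_B)$ via the honest Bob output function $g$. To analyse this, I would compare two joint distributions of $(s_1,V_M,V_B)$: in a genuine $c=1$ execution the Markov chain gives $P(s_1,V_M\mid c=1)\cdot P(V_B\mid V_M,c=1)$ and~\eqref{eq:correct} yields $\Pr[g(V_B)=s_1\mid c=1]\ge 1-\eps$; in the attack the same conditional $P(V_B\mid V_M,c=1)$ is used but paired with the $c=0$ marginal $P(s_1,V_M\mid c=0)$, so by the previous step the success probability drops by at most $2\eps$ and $\Pr[\hat s_1=s_1]\ge 1-3\eps$.

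Combining, the coalition guesses $s_0$ with probability $\ge 1-\eps$ and $s_1$ with probability $\ge 1-3\eps$. But~\eqref{eq:Bobcheats} requires that for some $b\in\{0,1\}$ the guessing probability is at most $2^{-\ell}+\eps$; the weaker of the two attack bounds forces $1-3\eps\le 2^{-\ell}+\eps$, i.e.\ $\eps\ge\tfrac{1}{4}-2^{-(\ell+2)}$. For strictly smaller $\eps$ no choice of $b$ is consistent, proving the impossibility claim. The main delicacy I expect is the rigorous justification of the Markov factorisation $V_A\to V_M\to V_B$ for arbitrary adaptive interactive protocols over the network---this is exactly what the vertex-cut property of $M$ buys us---together with confirming that the posterior sampling $\tilde R_B\sim P(R_B\mid V_M,c=1)$ is well-defined, which is unproblematic in the information-theoretic setting with unbounded adversaries.
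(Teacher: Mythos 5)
Your proof is correct and reaches the paper's exact threshold $\eps \geq 1/4 - 2^{-(\ell+2)}$, but by a genuinely different attack. The paper corrupts Alice together with the cut $M$: combining correctness with the fact that all of Bob's information flows through $M$ gives $\Pr[s_c|M]\geq 1-\eps$ as in~\eqref{eq:conditionD}, security against Bob gives $\Pr[s_{1-c}|M]\leq 2^{-\ell}+\eps$ as in~\eqref{eq:conditionE}, and Alice then learns $c$ simply by asking $M$ to guess $s_b$ for a random $b$ and observing whether the guess is right --- an elementary counting argument with no distributional machinery and no simulation. You instead corrupt Bob together with $M$ and violate~\eqref{eq:Bobcheats}: from the vertex cut you extract the factorisation $V_A \to V_M \to V_B$, from~\eqref{eq:Alicecheats} a $2\eps$ statistical-distance bound on $(s_0,s_1,V_M)$ across $c=0,1$, and then $M$ re-runs Bob's side with choice bit $1$ by posterior-sampling Bob-side randomness consistent with the recorded $V_M$, so the coalition obtains $s_0$ honestly (probability $\geq 1-\eps$) and $s_1$ by simulation (probability $\geq 1-3\eps$). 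This is essentially Lo's two-party impossibility argument transplanted to the cut setting; it makes explicit that $M$'s view is nearly independent of $c$ and is closer to standard no-go techniques, at the price of having to justify the Markov factorisation for adaptive protocols with link-OT calls (it does hold, since every link-OT crossing the cut has an endpoint in $M$, so all cross-cut influence is a function of $V_M$ and same-side data, and your posterior sampling of $\tilde R_B$ consistent with $V_M$ correctly reproduces $P(V_B\mid V_M,c=1)$) and the minor technicality of defining the resampling channel when $\Pr[V_M\mid c=1]=0$, which is absorbed into the $2\eps$ loss. Both routes are sound and yield the identical bound; the paper's argument is shorter and needs only the three probability conditions, while yours is more systematic and generalises more readily to composable or quantitative view-based formulations.
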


\begin{proof}
	
	If there is no honest path, then there exists some subset of potentially dishonest nodes $M$ that separates the
	network into two disconnected components, one containing Alice and the other Bob. 
Let us now establish some basic properties of the probabilities that Alice, Bob or $M$ can learn $s_0,s_1$ and $c$ in an honest execution of any protocol. 

Note that in any protocol, Bob cannot gain more information about Alice's inputs than $M$ can, since all information between Alice and Bob runs through $M$ (wlog we can furthermore assume that dishonest Bob would give any shared secret keys with Alice to $M$ for free). 
Hence, we have that
	\begin{align}
		\forall b\ \Pr[s_{b}|Bob] \leq \Pr[s_{b}|M]\ .
		\label{eq:conditionA}
	\end{align}
Similarly, Alice cannot gain more information about Bob's input than $M$ can, hence
\begin{align}
		\Pr[{c}|Alice] \leq \Pr[{c}|M]\ .
		\label{eq:conditionB}
	\end{align}
First, suppose that for an honest execution of any protocol the probability that $M$ is able to guess $c$ satisfies $\Pr[c|M] > 1/2 + \eps$. Then, Alice can violate the security condition~\eqref{eq:Alicecheats} by running the protocol honestly with Bob and then asking $M$ for a guess of $c$. Hence, it must hold that 
	\begin{align}
		\Pr[c|M] \leq \frac{1}{2} + \eps\ .
		\label{eq:conditionC}
	\end{align}
Second, by the correctness condition~\eqref{eq:correct} and equation~\eqref{eq:conditionA}, for an honest execution of any protocol, we have
\begin{align}
	\Pr[s_{c}|M] \geq 1 - \eps\ .
	\label{eq:conditionD}
\end{align}
Third, suppose that for an honest execution of any protocol, $\Pr[s_{1-c}|M] > \frac{1}{2^\ell} + \eps$. Then, Bob can violate the security condition~\eqref{eq:Bobcheats} by running the protocol honestly with Alice and then asking $M$ for a guess for both $s_0$ and $s_1$. Hence, it must hold that
\begin{align}
	\Pr[s_{1-c}|M] \leq \frac{1}{2^\ell} + \eps\ .
	\label{eq:conditionE}
\end{align}
	We now show that these conditions imply that whenever Bob is honest, there exists a cheating strategy for Alice.
	Alice first chooses two random inputs $s_0,s_1 \in \01^\ell$, and runs the protocol as an honest Alice would do. 
	Afterwards, she picks a random $b$ and asks $M$, who by definition will willingly cooperate with any cheating party, to send her a guess 
	$\tilde{s}_b$ for $s_b$. Note that~\eqref{eq:conditionD} and~\eqref{eq:conditionE} now tell us
	that the probability that $M$ succeeds is very large for $s_{c}$, but 
	extremely small for $s_{1-c}$. Alice then outputs $b$ as her guess for $c$ if $M$ guessed correctly and $1-b$ if $M$ guessed wrongly.
	The probability that Alice succeeds using this strategy obeys
	\begin{align}
		\lefteqn{\Pr[c|Alice]}\\
		 &= \Pr[b=c]\Pr[s_{c}|M]+\Pr[b=1-c] (1 - \Pr[s_{1-c}|M])\nonumber\\
		&\geq \frac{1}{2}(1-\eps)+\frac{1}{2} (1 - \frac{1}{2^{\ell}}-\eps)\\
			      &= (1-\eps- \frac{1}{2^{\ell+1}} )
\label{eq:lowerAlice}\ .
	\end{align}
	Comparing~\eqref{eq:lowerAlice} with~\eqref{eq:Alicecheats} concludes our claim.
\end{proof}

Note, however, that OT \emph{is} of course possible if $M$ would be fully quantum, and in particular would be able to perform
entanglement swapping between Alice and Bob.
\end{document}